
\documentclass[letterpaper, 10 pt, conference]{ieeeconf}  
\pdfminorversion=4


\usepackage{amsmath} 
\usepackage{amssymb}  
\usepackage{graphicx}
\usepackage{epstopdf}
\usepackage{amsmath}
\usepackage{mathtools}
\usepackage{dsfont}
\usepackage{tikz}
\usepackage{siunitx}
\usepackage{xcolor}

\usepackage{hyperref}

\usepackage{balance}    
\usepackage{amsthm}
\def\sq{\mathbin{{\strut\rule{1.25ex}{1.25ex}}}}



\newcommand{\beq}{\begin{equation}}
\newcommand{\eeq}{\end{equation}}


\def\mathcolor#1#{\@mathcolor{#1}}
\def\@mathcolor#1#2#3{%
  \protect\leavevmode
  \begingroup
    \color#1{#2}#3%
  \endgroup
}

\newcounter{algorithmctr}[section]
\renewcommand{\thealgorithmctr}{\thesection.\arabic{algorithmctr}}
{\refstepcounter{algorithmctr}\begin{list}{}{%
\setlength{\rightmargin}{0\linewidth}%
\setlength{\leftmargin}{.05\linewidth}
\setlength{\itemsep}{1pt}
\setlength{\parskip}{0pt}
\setlength{\parsep}{0pt}}%
\rmfamily\small
\item[]{\setlength{\parskip}{0ex}\hrulefill\par%
\nopagebreak{\bfseries\textsf{Algorithm \thealgorithmctr~}}}}%
{{\setlength{\parskip}{-1ex}\nopagebreak\par\hrulefill} \end{list}}
\IEEEoverridecommandlockouts

\usepackage[noadjust]{cite} 
\usepackage{amsmath,stmaryrd,graphicx}

\newtheoremstyle{boldStyle}
  {\topsep}
  {\topsep}
  {\itshape}
  {0pt}
  {\bfseries}
  {.}
  { }
  {\thmname{#1}\thmnumber{ #2}\thmnote{ (#3)}}

\newtheoremstyle{italicStyle}
  {\topsep}
  {\topsep}
  {}
  {0pt}
  {\bfseries}
  {.}
  { }
  {\thmname{#1}\thmnumber{ #2}\thmnote{ (#3)}}

\theoremstyle{boldStyle}
\newtheorem{property}{Property}
\newtheorem{theorem}{Theorem}

\newtheorem{proposition}{Proposition}

\theoremstyle{italicStyle}
\newtheorem{assumption}{Assumption}
\newtheorem{remark}{Remark}

\renewenvironment{proof}{{\textbf{Proof:}}}{\hfill$\sq$}

\makeatletter
\newcommand{\fixed@sra}{$\vrule height 2\fontdimen22\textfont2 width 0pt\shortrightarrow$}
\newcommand{\shortarrow}[1]{%
  \mathrel{\text{\rotatebox[origin=c]{\numexpr#1*45}{\fixed@sra}}}
}
\makeatother

\title{\LARGE \bf
Multi-Rate Control Design  Leveraging Control Barrier Functions and Model Predictive Control Policies}

\author{Ugo Rosolia and Aaron D. Ames 
\thanks{Ugo Rosolia and Aaron D. Ames are with the AMBER lab at Caltech, Pasadena, USA.
E-mails: \tt\scriptsize{\{urosolia, ames\}@caltech.edu.}}
}%

\begin{document}

\maketitle
\thispagestyle{empty}
\pagestyle{empty}

\begin{abstract}
In this paper we present a multi-rate control architecture for safety critical systems. We consider a high level planner and a low level controller which operate at different frequencies. This multi-rate behavior is described by a piecewise nonlinear model which evolves on a continuous and a discrete level. First, we present sufficient conditions which guarantee recursive constraint satisfaction for the closed-loop system. Afterwards, we propose a control design methodology which leverages Control Barrier Functions (CBFs) for low level control and Model Predictive Control (MPC) policies for high level planning. The control barrier function is designed using the full nonlinear dynamical model and the MPC is based on a simplified planning model. When the nonlinear system is control affine and the high level planning model is linear, the control actions are computed by solving convex optimization problems at each level of the hierarchy. Finally, we show the effectiveness of the proposed strategy on a simulation example, where the low level control action is updated at a higher frequency than the high level command.
\end{abstract}

\section{Introduction}

Autonomous systems are designed to take control actions upon sensing the environment around them.
The decision making process is usually divided into different layers. For instance, in autonomous driving the top layer determines a goal or intention, such as lane keeping, merging or overtaking. Then, a high level planner computes a desired collision-free trajectory, which is then fed to a low level controller that computes the control action. Each layer operates at different frequency and it is designed using model of increasing accuracy and complexity. 

Combining high level planners with low level controllers has been extensively studied in literature~\cite{gurriet2018towards, CBF, wang2017safety, wabersich2018linear, herbert2017fastrack,yin2019optimization, singh2018robust, singh2017robust, gao2014tube, kogel2015discrete, yu2013tube}. Safety can be guaranteed using low level filters which, given a desired high level command, compute the closest safe control action using control barrier functions~\cite{gurriet2018towards,CBF,wang2017safety} or feasibility of an MPC problem~\cite{wabersich2018linear}.
The high level planner may be designed using a simplified model and the planned trajectory can be tracked using low level controllers.
The tracking error and the associated tracking policy can be computed using Hamilton-Jacobi (HJ) reachability analysis~\cite{herbert2017fastrack} or sum-of-squares programming~\cite{singh2018robust,yin2019optimization}.
Finally, high level planning and low level control can be implemented using nonlinear tube MPC strategies~\cite{gao2014tube, kogel2015discrete, yu2013tube, singh2017robust,kohler2020computationally}, where the difference between the planned trajectory and the actual one is over approximated using Lyapunov based analysis or Lipschitz properties of the nonlinear dynamics.

\begin{figure}[h!]
    \centering
	\includegraphics[trim= 10mm 8mm 2mm 1mm,width=0.95\columnwidth]{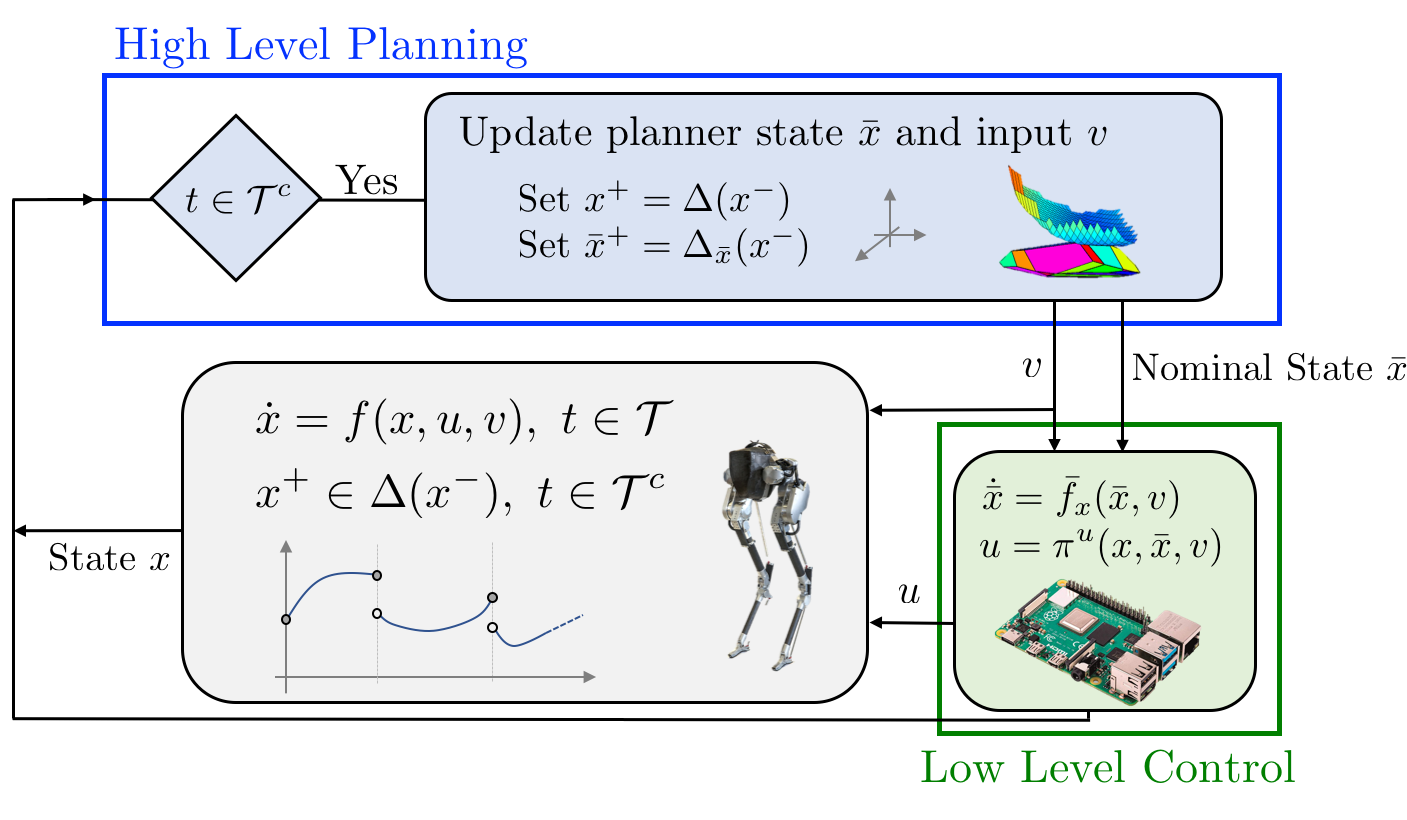}
    \caption{Representation of the multi-rate control architecture. The high level planner computes the desired state $\bar x$ and the high level command $v$.
    The low level controller computes at higher frequency the action $u$.}
    \label{fig:architecture}
\end{figure}

In the aforementioned papers, the low level and high level control actions are updated at the same frequency.
In this paper, we consider a high level planner which operates at a lower frequency than the low level controller. Multi-rate strategies are used in several applications, for instance in bipedal locomotion~\cite{luo2019robust, reher2019dynamic}, autonomous driving~\cite{rosolia2016autonomous, kapania2015design} and power grids~\cite{chen2019safety,farina2018hierarchical}. In this work, we introduce sufficient conditions to analyze the closed-loop safety properties of such control architectures.
Our contribution is threefold. 
First, we introduce sufficient conditions which guarantee recursive constraint satisfaction for a multi-frequency high level planning and low level control architecture, where the high level planner can reset its internal state as a function of the current state of the system.
Second, we present a control design which leverages CBFs for low level control and MPC for high level planning. We show that when the true system is nonlinear control affine and the planning model is linear, then the proposed strategy is implemented solving convex optimization problems. 
Third, we benchmark the proposed strategy against linear and nonlinear MPC policies. Simulation results demonstrate the benefit of the proposed multi-rate architecture, where the low level control action is updated at a higher frequency than the high level command. 

This paper is organized as follows. In Section~\ref{sec:problemFormulation} we introduce the problem formulation. Section~\ref{sec:frameworkArchitecture} describes the control architecture and the sufficient conditions which guarantee safety.
The synthesis process is described in Section~\ref{sec:controllerSyn} and it is demonstrated on a numerical example in Section~\ref{sec:Results}.\\
\textit{Notation:} The Minkowski sum of two sets $\mathcal{X}\subset \mathbb{R}^n$ and $\mathcal{Y}\subset \mathbb{R}^n$ is denoted as $\mathcal{X}\oplus\mathcal{Y}$, and the  Pontryagin difference as $\mathcal{X}\ominus\mathcal{Y}$. The set $\mathcal{K}^e$ is the set of extended class-$\mathcal{K}^e$ functions $\beta$ which are strictly increasing and $\beta(0)=0$.
Finally, given a function $f:\mathbb{R}^n \rightarrow \mathbb{R}^m$ and a set $\mathcal{X} \subset \mathbb{R}^n$ we denote the set $f(\mathcal{X}) = \{ y \in \mathbb{R}^m : \exists ~ x \in \mathcal{X} \text{ such that } y = f(x)\}$.

\section{Problem Formulation}\label{sec:problemFormulation}
This section introduces the system model and the synthesis objectives.
Consider a \textit{piecewise nonlinear model}:
\begin{equation}\label{eq:sys}
\begin{aligned}
    \Sigma : \begin{cases}\dot x(t) = f(x(t),u(t),v(t)), & t \in \mathcal{T}= \cup_{k=0}^\infty  (t_k, t_{k+1})\\
    x^+(t) = \Delta(x^-(t)), & t \in \mathcal{T}^c= \cup_{k=0}^\infty  \{t_k\} \end{cases},
\end{aligned}
\end{equation}
where the state $x\in \mathbb{R}^n$, the set $\mathcal{T}$ collects the open intervals from time $t_k$ to time $t_{k+1}$ and its complement $\mathcal{T}^c$ collects the time instances $t_k$. As a result, the above system~\eqref{eq:sys} evolves accordingly to the differential equation $\dot x(t) = f(x(t),u(t),v(t))$ between time $t_k$ and time $t_{k+1}$. On the other hand, at time $t_k$ the system evolution is defined by the \textit{reset map} $\Delta(\cdot)$, where $x^-(t) = \lim_{\tau \shortarrow{1} t}x(\tau)$ and $x^+(t) = \lim_{\tau \shortarrow{7} t}x(\tau)$ are the right and left limits of a trajectory $x(t)$ which is assumed right continuous. 
Furthermore, we assume that the input $u \in \mathbb{R}^d$ is a continuous function of the state $x$ and the input $v \in \mathbb{R}^d$ is a piecewise-constant function which is updated when $t \in \mathcal{T}^c$, i.e.,
\begin{equation}\label{eq:controlPolicy}
\begin{aligned}
    \Pi : \begin{cases}u(t) = \pi^u(x(t),v(t)),~ \dot v(t) = 0, & t \in \mathcal{T}\\
    u^+(t) = u^-(t),~  v^+(t) = \pi^v(x^+(t)), & t \in \mathcal{T}^c\end{cases}.
\end{aligned}    
\end{equation}
The above control policies~\eqref{eq:controlPolicy} in closed-loop with system~\eqref{eq:sys} results in a piecewise nonlinear autonomous system, which evolves on a discrete and a continuous level.

\textbf{Objective: }Our goal is to steer the system from a starting state $x_s$ to a goal state  $x_g$ while satisfying the following state and input constraints:
\begin{equation}\label{eq:stateInputConstr}
    \begin{aligned}
        & x(t) \in \mathcal{X}_c \subset \mathbb{R}^n,~\forall  t \in \mathcal{T},\\
        & x^+(t) \in \mathcal{X}_d \subseteq \mathcal{X}_c\subset \mathbb{R}^n, ~\forall t \in \mathcal{T}^c, \\
        & u(t) \in \mathcal{U}\subset \mathbb{R}^d, v(t) \in \mathcal{V} \subset \mathbb{R}^d, \forall t \geq 0.
    \end{aligned}
\end{equation}

\section{Framework Architecture And Properties }\label{sec:frameworkArchitecture}
In this section we present the multi-rate control architecture. 
First, we introduce an augmented model, which is composed by the piecewise nonlinear system~\eqref{eq:sys} and a high level planning model. 
The latter is affected by the piecewise constant input $v$ and it is used to compute the planner state $\bar x \in \mathbb{R}^n$.
Afterwards, the planned trajectory together with the input $v$ are fed to the low level controller which computes the control action $u$, as shown in Figure~\ref{fig:architecture}.

\subsection{Augmented System}
The augmented system is defined as
\begin{equation}\label{eq:augSys}
\begin{aligned}
    \Sigma_{\bar x} : \begin{cases}
    \begin{matrix*}[l]\dot x(t) = f\big(x(t),u(t),v(t)\big) \\\dot{\bar x}(t) = f_{\bar x}(\bar x(t),v(t)) \end{matrix*}, &  t \in \mathcal{T}= \cup_{k=0}^\infty  (t_k, t_{k+1})\\
    \begin{matrix*}[l]x^+(t) = \Delta(x^-(t)) \\ \bar x^+(t) = \Delta_{\bar x}(x^-(t)) \end{matrix*}, &  t \in \mathcal{T}^c= \cup_{k=0}^\infty  \{t_k\}\\
    \end{cases}
\end{aligned}
\end{equation}
and the control actions are given by the policies
\begin{equation}\label{eq:augPolicy}
\begin{aligned}
    \Pi_{\bar x} : \begin{cases} u(t) = \pi^u\big(x(t), \bar x(t), v(t)\big),~ \dot v(t) = 0, & t \in \mathcal{T}\\
     u^+(t) = u^-(t),~  v^+(t) = \pi^v\big(x^+(t)\big), & t \in \mathcal{T}^c\end{cases},
\end{aligned}    
\end{equation}
where $\mathcal{T}$ and $\mathcal{T}^c$ are defined as in~\eqref{eq:augSys} and $\bar x \in \mathbb{R}^n$ represents the planned state which is affected by the piecewise input $v$. 


\subsection{High Level and Low Level Properties}\label{sec:properties}
In this section, we define four properties associated with the high level planner and low level controller. 
As we will discuss later on, when these properties hold the closed-loop system is guaranteed to recursively satisfy state and input constraints~\eqref{eq:stateInputConstr}. 

Consider the closed-loop system~\eqref{eq:augSys}-\eqref{eq:augPolicy} and let $t_k$ be the time at which the $k$th discontinuous transition occurs, i.e., $  x^+(t_k)= \Delta (x^-(t_k))~ \forall k \in \{1,2,\ldots \}$. 
We define the error $e=x-\bar x$ and we introduce the following error dynamics:
\begin{equation}\label{eq:errorDef}
\begin{aligned}
    \Sigma_e : \begin{cases}\dot e(t) = f_e\big(x(t),u(t),v(t),\bar x(t)\big), & t \in \mathcal{T} \\
    e^+(t) = \Delta_e(x^-(t)), & t \in \mathcal{T}^c \end{cases},
\end{aligned}    
\end{equation}
where $\mathcal{T}$ and $\mathcal{T}^c$  are defined as in~\eqref{eq:augSys}, and the reset map $\Delta_e(\cdot)$ is designed so that the following properties hold.

\begin{property}[\textbf{low level safety}]\label{prop:lls}
The control policy $\pi^u(\cdot)$ from~\eqref{eq:augPolicy} guarantees \textit{low level safety} for the closed-loop system~\eqref{eq:augSys}-\eqref{eq:augPolicy} and the set $\mathcal{S}_x \subseteq \mathcal{X}_c \subset \mathbb{R}^n$, if $ \forall x^+(t_k) \in \mathcal{S}_x \cap \mathcal{X}_d$ and $\forall v^+(t_k) \in \mathcal{V}$ we have that
\begin{equation}
    \begin{aligned}
    x(t) \in \mathcal{S}_x \subseteq \mathcal{X}_c \textit{ and }u(t) \in \mathcal{U}, \forall t \in (t_k, t_{k+1}).
    \end{aligned}
\end{equation}
\end{property}

Basically, the above property guarantees that state and input constraints are satisfied when the system evolves smoothly between time $t_{k}$ and time $t_{k+1}$. In particular, if at time $t_k$ the state $x^+(t_k)$ belongs to the set $\mathcal{S}_x \cap \mathcal{X}_d $, then the low level controller $\pi^u(\cdot)$ guarantees state and input constraint satisfaction until the next discontinuous transition at time $t_{k+1}$.

\begin{property}[\textbf{low level tracking}]\label{prop:llt}
The control policy $\pi^u(\cdot)$ from~\eqref{eq:augPolicy} guarantees \textit{low level tracking} for the closed-loop system~\eqref{eq:augSys}-\eqref{eq:augPolicy}, the set $\mathcal{S}_e \subset \mathbb{R}^n$ and  the set $\mathcal{S}_x \subseteq \mathcal{X}_c \subset \mathbb{R}^n$, if $ \forall e^+(t_k) = x^+(t_k) - \bar x^+(t_k) \in \mathcal{S}_e$, $\forall x^+(t_k) \in \mathcal{S}_x \cap \mathcal{X}_d$ and $\forall v^+(t_k) \in \mathcal{V}$ we have that
\begin{equation}
    \begin{aligned}
        e(t) = x(t) - \bar x(t) \in \mathcal{S}_e, \forall t \in (t_k, t_{k+1}).
    \end{aligned}
\end{equation}
\end{property}

The low level tracking property ensures that the difference between the planned trajectory and the true state is contained into the set $\mathcal{S}_e$ for all time $t \in (t_k, t_{k+1})$. The above Properties~\ref{prop:lls}-\ref{prop:llt} guarantee that the planned trajectory can be safely executed by the true system.

\begin{property}[\textbf{high level safety}]\label{prop:hls}
The control policy $\pi^v(\cdot)$ from~\eqref{eq:augPolicy} guarantees high level safety for the closed-loop system~\eqref{eq:augSys}-\eqref{eq:augPolicy}, the set $\mathcal{S}_e \subset \mathbb{R}^n$ and  the set $\mathcal{S}_x \subseteq \mathcal{X}_c \subset \mathbb{R}^n$, if for the initial conditions $x(0) = \bar x(0) + e(0) \in \mathcal{S}_x \cap \mathcal{X}_d$ and $e(0) \in \mathcal{S}_e$ we have that $\pi^v(x^+(0)) \in \mathcal{V}$ and
\begin{equation}\label{eq:hls}
\begin{aligned}
    & z \in \mathcal{S}_x \cap \mathcal{X}_d,\\
    &\pi^v(z) \in \mathcal{V},~\forall z \in \Delta (\{\bar x^-(t_k)\}\oplus\mathcal{S}_e),\forall k \in \{1,2, \ldots\}.
   \end{aligned}
\end{equation}
\end{property}

\begin{property}[\textbf{high level tracking}]\label{prop:hlt}
The reset map $ \Delta_e(\cdot)$ from~\eqref{eq:errorDef} guarantees high level tracking for the closed-loop system~\eqref{eq:augSys}-\eqref{eq:augPolicy}, the set $\mathcal{S}_e \subset \mathbb{R}^n$ and  the set $\mathcal{S}_x \subseteq \mathcal{X}_c \subset \mathbb{R}^n$, if for the initial conditions $x(0) = \bar x(0) + e(0) \in \mathcal{S}_x \cap \mathcal{X}_d$ and $e(0) \in \mathcal{S}_e$ we have that
\begin{equation}
\begin{aligned}
    &\Delta(z) = \Delta_{\bar x}(z) + \Delta_e(z),\\
    &\Delta_e(z) \in \mathcal{S}_e, \forall z \in \{ \bar x^-(t_k) \} \oplus\mathcal{S}_e, \forall k \in \{0,1,\ldots\}.
    \end{aligned}
\end{equation}
\end{property}

It is important to underline that the above Properties~\ref{prop:hls}-\ref{prop:hlt} are defined for the planning model at the discontinuous transitions, i.e., at time $t_k$ for $k \in \{0,1,\ldots\}$. This fact allows us to design a high level planner based on a discrete time model, which describes the evolution of the planned trajectory from $\bar x^+(t_k)$ to $\bar x^+(t_{k+1})$.

\subsection{Safety Guarantees}\label{sec:safetyGuarantees}

In this section, we show that when the control policies from~\eqref{eq:augPolicy} satisfy Properties~\ref{prop:lls}-\ref{prop:hlt}, the closed-loop system~\eqref{eq:augSys}-\eqref{eq:augPolicy} does not violate state and input constraints~\eqref{eq:stateInputConstr}. 

\begin{theorem}\label{th:safety}
 Assume that Properties~\ref{prop:lls}-\ref{prop:hlt} are satisfied for the closed-loop system~\eqref{eq:augSys}-\eqref{eq:augPolicy}, the set $\mathcal{S}_e \subset \mathbb{R}^n$ and the set $\mathcal{S}_x \subseteq \mathcal{X}_c \subset \mathbb{R}^n$. Let $x(0) = \bar x(0) + e(0) \in \mathcal{S}_x \cap \mathcal{X}_d$ and $e(0) \in \mathcal{S}_e$. 
Then, the closed-loop system~\eqref{eq:augSys}-\eqref{eq:augPolicy} satisfies state and input constraints~\eqref{eq:stateInputConstr} for all time $t\geq0$.
\end{theorem}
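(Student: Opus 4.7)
The plan is to prove the result by induction on the reset index $k$, maintaining at each reset time $t_k$ the three invariants: (i) $x^+(t_k)\in\mathcal{S}_x\cap\mathcal{X}_d$, (ii) $e^+(t_k)\in\mathcal{S}_e$, and (iii) $v^+(t_k)\in\mathcal{V}$. Combining these invariants with the low-level properties will then yield constraint satisfaction on each open interval $(t_k,t_{k+1})$, and concatenating these conclusions covers all $t\geq 0$.

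For the base case at $k=0$, hypotheses (i) and (ii) hold by assumption, and Property~\ref{prop:hls} applied to the initial condition immediately gives $v^+(0)=\pi^v(x^+(0))\in\mathcal{V}$.

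For the inductive step, assume the three invariants hold at $t_k$. Property~\ref{prop:lls} then ensures $x(t)\in\mathcal{S}_x\subseteq\mathcal{X}_c$ and $u(t)\in\mathcal{U}$ for every $t\in(t_k,t_{k+1})$, and Property~\ref{prop:llt} ensures $e(t)\in\mathcal{S}_e$ on the same interval. Passing to the left limit at $t_{k+1}$ yields $e^-(t_{k+1})\in\mathcal{S}_e$, so that $x^-(t_{k+1})=\bar x^-(t_{k+1})+e^-(t_{k+1})\in\{\bar x^-(t_{k+1})\}\oplus\mathcal{S}_e$. This is precisely the set membership needed to invoke Properties~\ref{prop:hls} and~\ref{prop:hlt} at the reset. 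Property~\ref{prop:hls} then gives $x^+(t_{k+1})=\Delta(x^-(t_{k+1}))\in\mathcal{S}_x\cap\mathcal{X}_d$ together with $v^+(t_{k+1})=\pi^v(x^+(t_{k+1}))\in\mathcal{V}$, re-establishing invariants (i) and (iii). Property~\ref{prop:hlt} supplies the decomposition $\Delta(z)=\Delta_{\bar x}(z)+\Delta_e(z)$, so that $e^+(t_{k+1})=x^+(t_{k+1})-\bar x^+(t_{k+1})=\Delta_e(x^-(t_{k+1}))\in\mathcal{S}_e$, restoring invariant (ii). Finally, the state and input constraints on the reset instants $t_k$ themselves follow from invariant (i) (which gives $x^+(t_k)\in\mathcal{X}_d$) and the piecewise-constant character of $u$ and $v$, which inherit admissibility from the previous interval and from the newly computed $v^+(t_k)\in\mathcal{V}$.

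The argument is essentially a bookkeeping exercise: no nontrivial analytic estimates are required, since each of the four properties has been carefully tailored to cover exactly one of the transitions (continuous flow of the state, continuous flow of the error, reset of the state, reset of the error). The only delicate point, and the step I would write out most carefully, is the handoff between the continuous and discrete phases, i.e., verifying that $x^-(t_{k+1})$ lies in the set $\{\bar x^-(t_{k+1})\}\oplus\mathcal{S}_e$ on which Properties~\ref{prop:hls} and~\ref{prop:hlt} are postulated; this relies on using Property~\ref{prop:llt} to pass the tracking invariant to the left limit before applying the reset, and on right-continuity of $x(t)$ to identify $x^+(t_k)$ with the value used by $\pi^v$.
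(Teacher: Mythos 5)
Your proposal is correct and follows essentially the same route as the paper's own proof: an induction over the reset indices maintaining the invariants $x^+(t_k)\in\mathcal{S}_x\cap\mathcal{X}_d$, $e^+(t_k)\in\mathcal{S}_e$, $v^+(t_k)\in\mathcal{V}$, with Properties~\ref{prop:lls}--\ref{prop:llt} covering the flow on $(t_k,t_{k+1})$ and Properties~\ref{prop:hls}--\ref{prop:hlt} re-establishing the invariants at the reset via $x^-(t_{k+1})\in\{\bar x^-(t_{k+1})\}\oplus\mathcal{S}_e$. The base case handled through Property~\ref{prop:hls} is likewise identical to the paper's argument.
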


\begin{proof}
The proof proceeds by induction. 
Assume that after the $k$th discontinuous transition $x^+(t_k) \in \mathcal{S}_x \cap \mathcal{X}_d$, $e^+(t_k) = x^+(t_k) - \bar x^+(t_k) \in \mathcal{S}_e$ and $v^+(t_k)\in \mathcal{V}$, then by Property~\ref{prop:lls} 
\begin{equation}\label{eq:proof:llc}
    x(t) \in \mathcal{S}_x \subseteq \mathcal{X}_c,u(t)\in \mathcal{U}, ~\forall t \in (t_k, t_{k+1}).
\end{equation}
Furthermore, by Property~\ref{prop:llt} we have that at time $t_{k+1}$
\begin{equation*}
    x^-(t_{k+1}) \in \{{\bar x}^-(t_{k+1}) \} \oplus \mathcal{S}_e.
\end{equation*}
The above equation together with Property~\ref{prop:hls} implies that
\begin{equation}
\begin{aligned}
    &x^+(t_{k+1}) = \Delta(x^-(t_{k+1})) \in \mathcal{S}_x \cap \mathcal{X}_d, v(t_{k+1}) \in \mathcal{V}.
\end{aligned}
\end{equation}
Finally, from Property~\ref{prop:hlt} we have 
\begin{equation}\label{eq:proof:hlt}
    \begin{aligned}
        &e^+(t_{k+1}) =  x^+(t_{k+1}) -\bar x^+(t_{k+1}) \in \mathcal{S}_e.
    \end{aligned}
\end{equation}
The above equations~\eqref{eq:proof:llc}-\eqref{eq:proof:hlt} imply that, if $x^+(t_k) \in \mathcal{S}_x \cap \mathcal{X}_d$, $e^+(t_k) = x^+(t_k) - \bar x^+(t_k) \in \mathcal{S}_e$ and $v^+(t_k)\in \mathcal{V}$, then state and input constraints~\eqref{eq:stateInputConstr} are satisfied for all $t\in(t_k, t_{k+1})$. Furthermore, we have that the state $x^+(t_{k+1}) \in \mathcal{S}_x  \cap \mathcal{X}_d$, the error $e^+(t_{k+1}) = x^+(t_{k+1}) - \bar x^+(t_{k+1}) \in \mathcal{S}_e$ and the input $v^+(t_{k+1})\in\mathcal{V}$. \\
Finally, by assumption $x(0) = \bar x(0) + e(0) \in \mathcal{S}_x \cap \mathcal{X}_d$ and $e(0) \in \mathcal{S}_e$, which imply from Property~\ref{prop:hls} that $v(0)\in\mathcal{V}$. 
Therefore, from equations~\eqref{eq:proof:llc}-\eqref{eq:proof:hlt}, we conclude by induction that the closed-loop system~\eqref{eq:augSys}-\eqref{eq:augPolicy} recursively satisfies state and input constraints for all $t\geq 0$.
\end{proof}

\begin{remark}
    We underline that guarantees from Theorem~\ref{th:safety} hold when the control action $u(t)$ is updated continuously. However, in practice the control action is updated at a high frequency, for instance at $1kHz$ in our simulations.
\end{remark}

\section{Synthesis: Leveraging CBFs and MPC}\label{sec:controllerSyn}

In this section, we discuss how the properties from Section~\ref{sec:properties} may be used to synthesize a safe controller. First, we show that Control Barrier Functions (CBFs) may be used to enforce low level safety and low level tracking. Afterwards, we design a Model Predictive Controller (MPC) to enforce high level safety and high level tracking.

We consider a control affine system where the input is given by the summation of the continuous control action $u$ and the piecewise constant action $v$, i.e.,
\begin{equation}\label{eq:controlAffineSystem}
\begin{aligned}
    \Sigma^a : \begin{cases}\dot x(t) = f\big(x(t)\big) + g\big(x(t)\big)\big(u(t)+v(t)\big), & t \in \mathcal{T}\\
    x^+(t) = \Delta\big(x^-(t)\big), & t \in \mathcal{T}^c\end{cases}
\end{aligned}
\end{equation}
where $\mathcal{T} = \cup_{k=0}^\infty(kT,(k+1)T)$, $\mathcal{T}^c = \cup_{k=0}^\infty \{kT\}$ and $1/T$ is the frequency at which the high level command is updated. Furthermore, we assume the $f$ and $g$ are locally Lipschitz continuous with respect to their arguments and that the map $\Delta(\cdot)$ is affine, as stated in Assumption~\ref{ass:affineMap}. 
Finally, the augmented model is given by
\begin{equation}\label{eq:augControlAffine}
\begin{aligned}
    \Sigma^a_{\boldsymbol{z}} : \begin{cases}\boldsymbol{\dot z}=\begin{bmatrix*}[l] \dot x \\ \dot {\bar x} \end{bmatrix*} = f^a(\boldsymbol{z}) + g^a(\boldsymbol{z})(u+v), 
    & \!\! t \in \mathcal{T}\\
     \boldsymbol{z}^+ = \begin{bmatrix}x^+ \\ \bar x^+ \end{bmatrix} = \begin{bmatrix} \Delta(x^{-}) \\ \Delta_{\bar x}(\bar x^{-}) \end{bmatrix}, & \!\! t \in \mathcal{T}^c\end{cases}.
\end{aligned}
\end{equation}
where we dropped the dependence on time $t$ to simplify the notation and the continuous evolution of the planning state $\bar x$ is described by linear dynamics, i.e., $\dot{\bar{x}}=A\bar x +Bv$.

\begin{assumption}\label{ass:affineMap} 
The functions $f$ and $g$ are locally Lipschitz continuous and the reset map $\Delta(\cdot)$ from \eqref{eq:controlAffineSystem} is affine. Consequently, the reset map $\Delta$ can be written as $\Delta(x) = Tx + p$, for some matrix $T\in \mathbb{R}^{n\times n}$ and some vector $p \in \mathbb{R}^{n}$.
\end{assumption}

\subsection{Control Barrier Functions}
In this section, we show that CBFs~\cite{CBF} can be used to enforce low level safety and low level tracking. Furthermore, we introduce a Control Lyapunov Function (CLF) which is used to reduce the tracking error. Finally, we combine CFBs and CLF into a QP, which defines the low level control policy from Figure~\ref{fig:architecture}.

First we define the following sets:
\begin{equation}\label{eq:S_x_S_e}
\begin{aligned}
    & \mathcal{S}_x = \big\{ x \in \mathbb{R}^{n} : h_x(x) \geq 0 \big\} \subseteq \mathcal{X}_c \subset \mathbb{R}^n, \\
    & \mathcal{S}_e = \{ e \in \mathbb{R}^{n} : h_e(e) \geq 0 \} \subset \mathbb{R}^n,
\end{aligned}
\end{equation}
which will be used to check if Properties~\ref{prop:lls}-\ref{prop:hlt} hold. The above functions $h_x$ and $h_e$ are designed by the user based on the application, as shown in the result section.
Furthermore, we define $||x||_Q = x^\top Qx$ and we introduce the candidate Lyapunov function
\begin{equation}\label{eq:CLF}
    V(\boldsymbol{z} )  = ||x - \bar x||_{Q_v},
\end{equation}
which penalizes the deviation of the true system from the planned trajectory.

Finally, the CBFs associated with the sets in~\eqref{eq:S_x_S_e}, and the CLF~\eqref{eq:CLF} are used to define the following CLF-CBF Quadratic Program (QP):
\begin{equation}\label{eq:CLF-CBF-QP}
    \begin{aligned}
        \min_{u \in \mathcal{U}, \gamma} ~ &||u||_2 + c_1 \gamma^2 \\
        \text{s.t. } ~ & \frac{\partial V(\boldsymbol{z})}{\partial \boldsymbol{z}}(f^a(\boldsymbol{z}) + g^a(\boldsymbol{z})(v+u)) \leq - c_2V+ \gamma \\
        & \frac{\partial h_{x}(x)}{\partial x}(f(x) + g(x)(v+u))  \geq - \alpha_1(h_{x}) \\
        & \frac{\partial h_{e}(e )}{\partial \boldsymbol{z}}(f^a(\boldsymbol{z}) + g^a(\boldsymbol{z})(v+u))  \geq - \alpha_2(h_{e}).
    \end{aligned}
\end{equation}
where $e = x - \bar x \in \mathbb{R}^n$ and $\boldsymbol{z} = [x^\top, \bar x^\top]^\top \in \mathbb{R}^{2n}$. Furthermore, in the above QP $c_1>0$, $c_2>0$, $\alpha_1 \in \mathcal{K}^e$ and $\alpha_2 \in \mathcal{K}^e$. Let $u^*(x,\bar x, v)$ and $\gamma^*(x,\bar x, v)$ be the optimal solution to~\eqref{eq:CLF-CBF-QP}, the low level policy is defined as
\begin{equation}\label{eq:uPolicy}
    \pi^{u}(x, \bar x, v ) = u^*(x,\bar x, v).
\end{equation}

\begin{assumption}\label{ass:QPfeasibility}
The Quadratic Program (QP)~\eqref{eq:CLF-CBF-QP} is feasible for all $\boldsymbol{z} \in \mathcal{I}=\{\boldsymbol{z}=[x^\top,\bar x^\top]^\top \in \mathbb{R}^{2n}: h_e(x-\bar x) \geq 0, h_x(x)\geq 0 \}$ and for all $v \in \mathcal{V}$.
\end{assumption}

\begin{proposition}\label{proposi:llProp}
Consider the system~\eqref{eq:augControlAffine} and \eqref{eq:uPolicy} with $v(t)\in\mathcal{V},\forall t\geq0$. If Assumptions~\ref{ass:affineMap}-\ref{ass:QPfeasibility} hold, then the control policy \eqref{eq:uPolicy} guarantees that Properties~\ref{prop:lls} and~\ref{prop:llt} are satisfied for the sets $\mathcal{S}_x$ and $\mathcal{S}_e$ from~\eqref{eq:S_x_S_e}.
\end{proposition}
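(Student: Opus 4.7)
The plan is to observe that the QP~\eqref{eq:CLF-CBF-QP} encodes, as two of its inequality constraints, exactly the standard CBF invariance conditions for the sets $\mathcal{S}_x$ and $\mathcal{S}_e$ defined in~\eqref{eq:S_x_S_e}, while the set $\mathcal{U}$ appears explicitly in the decision domain. Hence the proof reduces to (i) establishing that the policy $\pi^u$ in~\eqref{eq:uPolicy} yields a well-posed closed-loop trajectory on each open interval $(t_k,t_{k+1})$, and (ii) applying the usual Nagumo/CBF invariance argument twice, once for $h_x$ and once for $h_e$.

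First I would use Assumption~\ref{ass:QPfeasibility} to conclude that for every $v\in\mathcal{V}$ and every $\boldsymbol{z}\in\mathcal{I}$, the minimizer $u^*(x,\bar x,v)$ exists; since the QP has a strictly convex quadratic cost in $(u,\gamma)$ and affine constraints (with $f^a,g^a,\partial h_x/\partial x,\partial h_e/\partial\boldsymbol{z}$ locally Lipschitz by Assumption~\ref{ass:affineMap} and smoothness of $h_x,h_e$), the minimizer is unique and locally Lipschitz in $(\boldsymbol{z},v)$ on the interior of the feasible region. Since $v$ is held constant on $(t_k,t_{k+1})$ and the planning dynamics $\dot{\bar x}=A\bar x+Bv$ are linear, the augmented closed-loop vector field $f^a(\boldsymbol{z})+g^a(\boldsymbol{z})(v+\pi^u(x,\bar x,v))$ is locally Lipschitz in $\boldsymbol{z}$, so local existence and uniqueness of the closed-loop trajectory on each interval $(t_k,t_{k+1})$ holds from the initial condition $\boldsymbol{z}^+(t_k)$.

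Next, fix an interval $(t_k,t_{k+1})$ and assume $x^+(t_k)\in\mathcal{S}_x\cap\mathcal{X}_d$, $e^+(t_k)\in\mathcal{S}_e$ and $v^+(t_k)\in\mathcal{V}$, so that $h_x(x^+(t_k))\ge 0$ and $h_e(e^+(t_k))\ge 0$. Along the closed-loop trajectory the second constraint of~\eqref{eq:CLF-CBF-QP} evaluated at $u=\pi^u$ gives $\dot h_x(x(t))\ge -\alpha_1(h_x(x(t)))$, and the third gives $\dot h_e(e(t))\ge -\alpha_2(h_e(e(t)))$, with $\alpha_1,\alpha_2\in\mathcal{K}^e$. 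A standard comparison lemma argument (Nagumo's theorem applied through the CBF invariance result of~\cite{CBF}) then yields $h_x(x(t))\ge 0$ and $h_e(e(t))\ge 0$ for all $t\in(t_k,t_{k+1})$, i.e., $x(t)\in\mathcal{S}_x\subseteq\mathcal{X}_c$ and $e(t)=x(t)-\bar x(t)\in\mathcal{S}_e$. Finally, $u(t)=\pi^u(x(t),\bar x(t),v(t))\in\mathcal{U}$ holds by construction since $\mathcal{U}$ is the feasible set of the minimization in~\eqref{eq:CLF-CBF-QP}. This establishes both Property~\ref{prop:lls} and Property~\ref{prop:llt}.

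The step I expect to be the main obstacle is making the invocation of the CBF invariance theorem fully rigorous: applying it verbatim requires Lipschitz continuity (or at least continuity) of the closed-loop controller $\pi^u$ on the boundary of $\mathcal{S}_x\cap(\mathcal{S}_e+\{\bar x\})$, which is not automatic from QP feasibility alone. This would be handled by noting that the strict convexity of the cost together with Assumption~\ref{ass:QPfeasibility} and the locally Lipschitz data imply that $\pi^u$ is locally Lipschitz on an open set containing the relevant domain (as in the usual CLF-CBF-QP analysis), so that the comparison-lemma step applies without complication; everything else is a straightforward unpacking of the definitions in Section~\ref{sec:properties}.
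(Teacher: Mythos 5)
Your proof is correct and takes the same route as the paper, whose entire proof is the single sentence ``The proof follows from~\cite{CBF}''; your sketch is a faithful and more detailed unpacking of that citation (existence and Lipschitz regularity of the QP minimizer, the two comparison-lemma/Nagumo arguments for $h_x$ and $h_e$ on each interval $(t_k,t_{k+1})$, and the observation that $u\in\mathcal{U}$ is enforced directly as the decision domain of~\eqref{eq:CLF-CBF-QP}). The regularity caveat you flag at the end is precisely the standard hypothesis of the invariance theorem in~\cite{CBF} and is the only substantive point the paper leaves implicit.
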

\begin{proof}
The proof follows from~\cite{CBF}.
\end{proof}

\begin{remark}
    We underline that Assumption~\ref{ass:QPfeasibility} is satisfied for some $\alpha_1 \in \mathcal{K}^e$ and $\alpha_2 \in \mathcal{K}^e$ when the set $\mathcal{I}$ is robust control invariant for system~\eqref{eq:augSys} and mild assumptions on the Lie derivative of~\eqref{eq:augSys} hold (see \cite{CBF} for further details). The set $\mathcal{I}$ may be hard to compute and standard techniques are based on HJB reachability analysis~\cite{herbert2017fastrack}, SOS programming~\cite{singh2018robust}, Lyapunov-based methods~\cite{singh2017robust} and Lipschitz properties of the system dynamics~\cite{chen2018data,yu2013tube}.
\end{remark}

\subsection{Discrete Uncertain Model}
The CLF-CBF QP~\eqref{eq:CLF-CBF-QP} computes a control action which constraints the difference between the planned trajectory and the true system into $\mathcal{S}_e$. In this section, we leverage this property to construct a discrete time linear uncertain model, which over-approximates the evolution of the true system from $x^+(t_k)$ to $x^+(t_{k+1})$.

First, we define the following reset maps for the error and planning dynamics from~\eqref{eq:augControlAffine}:
\begin{equation}\label{eq:errorDynForCLF-CBF}
\begin{aligned}
    & \bar x^+ = \Delta_{\bar x}(x^-) = \Delta(x^-) = x^+\\
    & e^+ = \Delta_e(e^-) = 0.
\end{aligned}
\end{equation}
Basically, the above reset maps set the planning state $\bar x$ equal to the true state $x$, and consequently the error state $e=0$ after each $k$th discontinuous transition. We underline that setting $\bar x^+ = x^+$ is a design choice. It would be possible to design $\Delta_{\bar x}$, $\Delta$ and $\Delta_e$ such that $x^+ - \bar x^+ = e^+$ and let the high-level planner to select $\bar x^+$.

As the planning model is linear for all $t \in (t_k, t_{k+1})$, we have that
\begin{equation}\label{eq:nominalDiscreteModel}
\begin{aligned}
    & \bar x^-(t_{k+1}) = \bar A \bar x^+(t_k) + \bar B v^+(t_k),\\
\end{aligned}
\end{equation}
where the transition matrices are $\bar A = e^{A T} \text{ and } \bar B = \int_0^{T} e^{A(T-\eta)}B d\eta$. 
We notice that, when Assumptions~\ref{ass:affineMap}-\ref{ass:QPfeasibility} hold, from Proposition~1 we have that $x^-(t_{k+1}) \in \{ \bar x^-(t_{k+1})\} \oplus \mathcal{S}_e$. Furthermore, from equations~\eqref{eq:errorDynForCLF-CBF}-\eqref{eq:nominalDiscreteModel} we have that
\begin{equation}\label{eq:nominalDiscreteModelSteps}
\begin{aligned}
    x^+(t_{k+1}) &= \Delta(x^-(t_{k+1})) \in \Delta(\{ \bar x^-(t_{k+1}) \} \oplus \mathcal{\bar S}_e) \\
    &=  \{ T\bar A \bar x^+(t_k) + T \bar B v^+(t_k) + p \} \oplus T\mathcal{\bar S}_e \\
    &= \{ T \bar A x^+(t_k)) + T \bar B v^+(t_k) \} \oplus \Delta(\mathcal{\bar S}_e),
\end{aligned}
\end{equation}
where the polytope  $\mathcal{\bar S}_e$ contains the set $\mathcal{S}_e$, i.e.,  ${\mathcal{S}}_e \subseteq \bar{\mathcal{S}}_e$. Equation~\eqref{eq:nominalDiscreteModelSteps} defines a discrete time uncertain linear system which can be used to check if Property~\ref{prop:hls} is satisfied, as stated by the following proposition. Notice that in~\eqref{eq:nominalDiscreteModelSteps} we used the definition of $\Delta(\cdot)$ from Assumption~\ref{ass:affineMap}.


\begin{proposition}\label{proposi:hlProp}
Let Assumptions~\ref{ass:affineMap}-\ref{ass:QPfeasibility} hold. Consider the autonomous discrete time uncertain system
\begin{equation}\label{eq:linearDiscreteSystem}
    \begin{aligned}
        & x_d^+({k+1}) = T \bar A x_d^+(k) + T \bar B \pi^v(x_d^+(k)) + \bar w_d^+(k)\\
        & x_d^+(0)= x(0)= \bar x(0) = x_0,
    \end{aligned}
\end{equation}
where the control policy $\pi^v:\mathbb{R}^n \rightarrow \mathcal{V}$ and the disturbance $\bar w_d^+(k) \in \Delta(\bar{\mathcal{S}}_e)$. If the state of the above system $x_d^+(k) \in \mathcal{X}_d \cap \mathcal{S}_x$, $\forall \bar w_d^+(k) \in \Delta(\mathcal{\bar S}_e)$ and $\forall k \in \{0,1,\ldots\}$. Then Property~\ref{prop:hls} is satisfied  for the sets $\mathcal{S}_x$ and $\mathcal{S}_e$ from~\eqref{eq:S_x_S_e} and system~\eqref{eq:augControlAffine} in closed-loop with
\begin{equation}\label{eq:prop2Policy}
\begin{aligned}
    \Pi_{\bar x}^v : \begin{cases} u(t) = \pi^u\big(x(t), \bar x(t), v(t)\big),~ \dot v(t) = 0, & t \in \mathcal{T}\\
     u^+(t) = u^-(t),~  v^+(t) = \pi^v\big(x^+(t)\big), & t \in \mathcal{T}^v\end{cases},
\end{aligned}    
\end{equation}
where $\pi^u$ is defined in~\eqref{eq:uPolicy} and $\pi^v$ is the control policy from~\eqref{eq:linearDiscreteSystem}.
\end{proposition}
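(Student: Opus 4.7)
The strategy is to view every iterate $x^+(t_k)$ of the true closed-loop system as a realization of the uncertain discrete model~\eqref{eq:linearDiscreteSystem}, and then exploit that the hypothesis of the proposition holds uniformly over all admissible disturbance sequences. The first step is to invoke Proposition~\ref{proposi:llProp}: under Assumptions~\ref{ass:affineMap}--\ref{ass:QPfeasibility}, the policy $\pi^u$ enforces Property~\ref{prop:llt}, so $e^-(t_{k+1}) = x^-(t_{k+1}) - \bar x^-(t_{k+1}) \in \mathcal{S}_e \subseteq \bar{\mathcal{S}}_e$. Combined with the reset~\eqref{eq:errorDynForCLF-CBF} (which sets $\bar x^+(t_k) = x^+(t_k)$) and the solution to the linear planning ODE, the chain of inclusions in~\eqref{eq:nominalDiscreteModelSteps} then yields, for every $k$, some $w(k)\in\Delta(\bar{\mathcal{S}}_e)$ with
\[
x^+(t_{k+1}) = T\bar A\, x^+(t_k) + T\bar B\, \pi^v(x^+(t_k)) + w(k).
\]
Setting $x_d^+(k):=x^+(t_k)$ and using $\{w(k)\}_{k\ge 0}$ as the disturbance sequence therefore produces a valid trajectory of~\eqref{eq:linearDiscreteSystem} starting at $x_d^+(0)=x(0)=x_0$.

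The second step verifies each clause of Property~\ref{prop:hls}. The initial-time clause $\pi^v(x^+(0))\in\mathcal{V}$ is immediate from the codomain of $\pi^v$. For the inductive clause at a fixed $k\ge 1$, I would pick an arbitrary $z\in\Delta(\{\bar x^-(t_k)\}\oplus\mathcal{S}_e)$ and show both $z\in\mathcal{S}_x\cap\mathcal{X}_d$ and $\pi^v(z)\in\mathcal{V}$. Using $\bar x^-(t_k)=\bar A\,x^+(t_{k-1})+\bar B\,v^+(t_{k-1})$ together with the affine structure of $\Delta$ from Assumption~\ref{ass:affineMap}, every such $z$ can be written in the form
\[
z = T\bar A\, x^+(t_{k-1}) + T\bar B\, \pi^v(x^+(t_{k-1})) + w'
\]
for some $w'\in\Delta(\mathcal{S}_e)\subseteq\Delta(\bar{\mathcal{S}}_e)$. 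Splicing the disturbance sequence $w(0),\ldots,w(k-2),w'$ into~\eqref{eq:linearDiscreteSystem} produces a trajectory whose state at step $k$ is precisely $z$; by the hypothesis of the proposition this implies $z\in\mathcal{S}_x\cap\mathcal{X}_d$. The containment $\pi^v(z)\in\mathcal{V}$ again follows from the codomain of $\pi^v$.

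The main obstacle will be the set-algebra bookkeeping in the second step, specifically verifying that $\Delta(\{\bar x^-(t_k)\}\oplus\mathcal{S}_e)$ coincides with the one-step reachable set of~\eqref{eq:linearDiscreteSystem} from $x^+(t_{k-1})$ under disturbances in $\Delta(\mathcal{S}_e)$. This requires pulling $\Delta$ across a Minkowski sum as in~\eqref{eq:nominalDiscreteModelSteps} (which is where the affineness assumption on $\Delta$ is essential), and leveraging the identity $\bar x^+(t_{k-1})=x^+(t_{k-1})$ from~\eqref{eq:errorDynForCLF-CBF} so that the true trajectory and the uncertain system share a common state at each $t_k$. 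Everything else reduces to a direct invocation of the uniform-in-disturbance hypothesis, so no deeper technical machinery is expected to be required.
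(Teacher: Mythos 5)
Your proposal is correct and follows essentially the same route as the paper: the paper's proof introduces the robust reachable sets $\mathcal{R}_k$ of~\eqref{eq:linearDiscreteSystem} and argues via~\eqref{eq:nominalDiscreteModelSteps} that $x^+(t_k)\in\mathcal{R}_k\subset\mathcal{X}_d\cap\mathcal{S}_x$, which is exactly your "every true iterate is a realization of the uncertain model" argument, while your disturbance-splicing step just makes explicit the inclusion $\Delta(\{\bar x^-(t_k)\}\oplus\mathcal{S}_e)\subseteq\mathcal{R}_k$ that the paper leaves implicit. No changes needed.
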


\begin{proof}
First, we recursively define the $k$-steps robust reachable sets for the discrete time autonomous uncertain system~\eqref{eq:linearDiscreteSystem} and for $k\in \{0,1,\ldots\}$ 
\begin{equation*}
\begin{aligned}
    \mathcal{R}_{k+1} =\{x^+ \in \mathbb{R}^n:& \exists x \in \mathcal{R}_{k}, \exists w \in \Delta(\mathcal{\bar{S}}_e),\\
    &x^+ = T \bar A x + T \bar B \pi^v(x) + w \},
\end{aligned}
\end{equation*}
where $\mathcal{R}_0 = \{x_0\}$. Notice that by assumption $\mathcal{R}_{k}\subset \mathcal{X}_d\cap\mathcal{S}_x,~\forall k\in \{0,1,\ldots\}$, $\pi^v:\mathbb{R}^n\rightarrow \mathcal{V}$, $x(0)=\bar x(0) = x_0$ and $e(0)=0$. Finally, Assumptions~\ref{ass:affineMap}-\ref{ass:QPfeasibility} and equations~\eqref{eq:nominalDiscreteModelSteps}-\eqref{eq:linearDiscreteSystem} imply that the state $x(t)$ of the closed-loop system~\eqref{eq:augControlAffine} and~\eqref{eq:prop2Policy} satisfies $x^+(t_k) \in \mathcal{R}_k \subset \mathcal{X}_d \cap \mathcal{S}_x,~\forall k\in\{0, 1,\ldots\}.$
Therefore,  Property~\ref{prop:hls} is satisfied  for the sets $\mathcal{S}_x$ and $\mathcal{S}_e$ from~\eqref{eq:S_x_S_e} and the closed-loop system~\eqref{eq:augControlAffine} and~\eqref{eq:prop2Policy}.  
\end{proof}

\subsection{Model Predictive Control}\label{sec:MPC}
In this section, we design a Model Predictive Controller that allows us to guarantee high level safety and high level tracking from Properties~\ref{prop:hls}-\ref{prop:hlt}. In particular, we leverage the result from Proposition~\ref{proposi:hlProp} and we design a robust tube MPC with time-varying cross section as in~\cite{chisci2001systems}.


At time $t_k$ given the state of the system $x(t_k)$ we solve the following finite time  optimal control problem:
\begin{equation}\label{eq:ftocp}
\begin{aligned}
    \min_{\boldsymbol{v}_t} \quad & \sum_{k = t}^{t+N} \Big( || x_{k|t}-x_g||_Q + ||v_{k|t}||_R \Big) +|| x_{t+N|t}-x_g||_{Q_f} \\
    \text{s.t.} \quad & x_{k+1|t} = T (\bar A + \bar B K) x_{k|t} + T \bar B v_{k|t}\\
    &  x_{t|t} = x^+(t_k)\\
    &  x_{k|t} \in \mathcal{X}_d \cap \mathcal{S}_x \ominus \mathcal{E}_k, ~ v_{k|t} \in \mathcal{V} \ominus K \mathcal{E}_k  \\
    &  x_{t+N|t} \in \mathcal{X}_F \ominus \mathcal{E}_{t+N},\forall k = \{t, \ldots, t+N \}
\end{aligned}
\end{equation}
where $K$ is a stabilizing feedback gain, $||x||_Q = x^\top Qx$,  $\mathcal{E}_{k+1} = T (\bar A + \bar BK) \mathcal{E}_{k} \oplus \Delta(\mathcal{\bar S}_e)\text{ and }\mathcal{E}_0 = \{0\}.$
The above control problem computes a sequence of open loop actions $\boldsymbol{v}_t=[v_{t|t},\ldots,v_{t+N|t}]$ which robustly steer system~\eqref{eq:linearDiscreteSystem} from the current state $x(t_k)$ to the terminal set $\mathcal{X}_F$, while minimizing the nominal cost and robustly satisfying state and input constraints~\cite{chisci2001systems}.
Let $\boldsymbol{v}_t^*=[v^{*}_{t|t},\ldots,v^{*}_{t+N|t}]$ be the optimal solution and $[x^{*}_{t|t},\ldots,x^{*}_{t+N|t}]$ the associated optimal trajectory, then the MPC policy is 
\begin{equation}\label{eq:vPolicy}
    \pi^{v}\big(x^+(t_k)\big) = v^{*}_{t|t} + Kx_{t|t}^* .
\end{equation}

\begin{assumption}\label{ass:mpcAssumption}
The terminal constraint set $\mathcal{X}_F \subset \mathcal{X}_d$ in~\eqref{eq:ftocp} is a robust positive invariant set for the discrete time uncertain autonomous system $x(k+1) = T(\bar A + \bar B K)x(k) + w(k)$ with $w(k) \in \Delta(\mathcal{\bar S}_e)$ for all $k\in \{0,1,\dots\}$.
\end{assumption}

\subsection{Closed-loop Properties}\label{sec:closedProperties}
In this section, we show that Properties~\ref{prop:lls}-\ref{prop:hlt} hold for the closed-loop system~\eqref{eq:augControlAffine},~\eqref{eq:uPolicy} and~\eqref{eq:vPolicy} and the sets in~\eqref{eq:S_x_S_e}. Therefore, the closed-loop system satisfies state and input constraints~\eqref{eq:stateInputConstr}.

\begin{theorem}
Consider system~\eqref{eq:augControlAffine} in closed-loop with~\eqref{eq:uPolicy}  and~\eqref{eq:vPolicy}. Let Assumptions~\ref{ass:affineMap}-\ref{ass:mpcAssumption} hold. Assume that problem~\eqref{eq:ftocp} is feasible at time $t=0$, then the closed-loop system~\eqref{eq:augControlAffine},~\eqref{eq:uPolicy}  and~\eqref{eq:vPolicy} satisfies state and input constraints~\eqref{eq:stateInputConstr} for all time $t \geq 0$. 
\end{theorem}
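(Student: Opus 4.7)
The plan is to verify that Properties~\ref{prop:lls}--\ref{prop:hlt} hold for the closed-loop system~\eqref{eq:augControlAffine},~\eqref{eq:uPolicy} and~\eqref{eq:vPolicy}, and then invoke Theorem~\ref{th:safety}. Properties~\ref{prop:lls} and~\ref{prop:llt} follow directly from Proposition~\ref{proposi:llProp} under Assumptions~\ref{ass:affineMap}--\ref{ass:QPfeasibility}. Property~\ref{prop:hlt} is immediate from the reset map design in~\eqref{eq:errorDynForCLF-CBF}: since $\Delta_{\bar x}(x^-) = \Delta(x^-)$ and $\Delta_e(z) \equiv 0 \in \mathcal{S}_e$ (by the design of $h_e$), both defining conditions $\Delta(z) = \Delta_{\bar x}(z) + \Delta_e(z)$ and $\Delta_e(z) \in \mathcal{S}_e$ hold trivially. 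So the entire burden reduces to establishing Property~\ref{prop:hls}.

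To obtain Property~\ref{prop:hls}, I would apply Proposition~\ref{proposi:hlProp}, whose hypothesis requires that the state $x_d^+(k)$ of the autonomous uncertain discrete time system~\eqref{eq:linearDiscreteSystem} under the MPC policy $\pi^v$ satisfies $x_d^+(k) \in \mathcal{X}_d \cap \mathcal{S}_x$ for every admissible disturbance sequence $\bar w_d^+(k) \in \Delta(\bar{\mathcal{S}}_e)$ and every $k \geq 0$. This is precisely the standard robust constraint satisfaction guarantee provided by the tube MPC~\eqref{eq:ftocp}, so I would follow the argument of~\cite{chisci2001systems} tailored to the present setting.

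The main obstacle is establishing recursive feasibility of the FTOCP~\eqref{eq:ftocp}. The plan is induction on the discrete step $k$. Feasibility at $t=0$ is given. Suppose at time $t_k$ an optimal solution $\boldsymbol{v}_t^*$ with nominal trajectory $x^*_{t|t},\ldots,x^*_{t+N|t}$ exists. Build a candidate at time $t_{k+1}$ by shifting and appending the terminal control, i.e.\ $\tilde v_{j|t+1} = v^*_{j|t}$ for $j = t+1,\ldots,t+N$ and $\tilde v_{t+N+1|t+1} = K x^*_{t+N|t}$, with corresponding nominal trajectory $\tilde x_{j|t+1}$. Because the realized state satisfies $x^+(t_{k+1}) \in \{x^*_{t+1|t}\} \oplus \Delta(\bar{\mathcal{S}}_e)$ (by Proposition~\ref{proposi:llProp} combined with~\eqref{eq:nominalDiscreteModelSteps}), the tube cross-sections $\mathcal{E}_j$ absorb exactly this uncertainty: the inclusion $\tilde x_{j|t+1} \in x^*_{j|t} \oplus T(\bar A+\bar BK)^{j-t-1}\Delta(\bar{\mathcal{S}}_e)$ together with the recursion $\mathcal{E}_{j+1}=T(\bar A+\bar BK)\mathcal{E}_j\oplus\Delta(\bar{\mathcal{S}}_e)$ preserves the tightened state and input constraints $\tilde x_{j|t+1} \in \mathcal{X}_d \cap \mathcal{S}_x \ominus \mathcal{E}_{j-t-1}$ and $\tilde v_{j|t+1}+K\tilde x_{j|t+1} \in \mathcal{V} \ominus K\mathcal{E}_{j-t-1}$. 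The terminal step closes by Assumption~\ref{ass:mpcAssumption}: robust positive invariance of $\mathcal{X}_F$ under $x \mapsto T(\bar A+\bar BK)x + w$ guarantees $\tilde x_{t+N+1|t+1} \in \mathcal{X}_F \ominus \mathcal{E}_{t+N+1}$.

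With recursive feasibility in hand, any realization $x_d^+(k)$ of~\eqref{eq:linearDiscreteSystem} under $\pi^v$ lies in $\{x^*_{k|k}\} \oplus \mathcal{E}_{0} \subseteq \mathcal{X}_d \cap \mathcal{S}_x$ at every $k$, which is the hypothesis of Proposition~\ref{proposi:hlProp} and yields Property~\ref{prop:hls}. Combining all four properties, Theorem~\ref{th:safety} immediately delivers state and input constraint satisfaction~\eqref{eq:stateInputConstr} for all $t \geq 0$. I expect the only nontrivial verification to be the tube inclusion argument in the induction step, since everything else reduces to directly quoting earlier propositions.
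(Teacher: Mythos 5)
Your proposal follows essentially the same route as the paper's proof: Properties~\ref{prop:lls} and~\ref{prop:llt} from Proposition~\ref{proposi:llProp}, Property~\ref{prop:hlt} from the reset maps~\eqref{eq:errorDynForCLF-CBF}, Property~\ref{prop:hls} via Proposition~\ref{proposi:hlProp} and robust constraint satisfaction of the tube MPC, and then Theorem~\ref{th:safety}; the only difference is that the paper delegates the recursive-feasibility step entirely to ``standard MPC arguments''~\cite{chisci2001systems, borrelli2017predictive}, while you spell it out. Your induction sketch is sound except for one slip in the shifted candidate: the appended input should be $\tilde v_{t+N+1|t+1}=0$ rather than $Kx^*_{t+N|t}$, since the $+\bar B K$ feedback is already built into the nominal dynamics of~\eqref{eq:ftocp}, so $v=0$ is precisely the terminal controller under which Assumption~\ref{ass:mpcAssumption} asserts robust invariance of $\mathcal{X}_F$.
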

\begin{proof}
Notice that from Proposition~\ref{proposi:llProp} and equation~\eqref{eq:errorDynForCLF-CBF}, we have that the closed-loop system satisfies Properties~\ref{prop:lls},~\ref{prop:llt} and~\ref{prop:hlt}. Moreover from standard MPC arguments~\cite{chisci2001systems, borrelli2017predictive}, we have that
the closed-loop system~\eqref{eq:linearDiscreteSystem}, where the control policy $\pi^v(\cdot)$ is the MPC policy~\eqref{eq:vPolicy}, evolves inside~$\mathcal{X}_d \cap \mathcal{S}_x$ and $\pi^v(x(t_k)) \in \mathcal{V}$ for all $k \in \{0,1,\ldots\}$ (thus Property~\ref{prop:hls} hold). Concluding, Properties~\ref{prop:lls}-\ref{prop:hlt} are guaranteed for the closed-loop system~\eqref{eq:augControlAffine},~\eqref{eq:uPolicy}  and~\eqref{eq:vPolicy} and state and input constraints~\eqref{eq:stateInputConstr} are satisfied for all time $t \geq 0$.
\end{proof}

\section{Simulation Results}\label{sec:Results}
We use the proposed strategy to steer a Segway to a goal state\footnote{\scriptsize{Code available at \url{https://github.com/urosolia/MultiRate}}}, as shown in  Figure~\ref{fig:segway}.
The state of the system are the position $p_x$, the velocity $v_x$, the rod angle $\theta$ and the angular velocity $\omega$. The control action is the voltage commanded to the motor and the equations of motion used to simulate the system can be found in~\cite[Section~IV.B]{gurriet2018towards}. The nominal model is obtained using a small angle approximation and the MPC is implemented for $Q = \text{diag}(0,10^{-3},10^{-3},10^{-2})$, $R = 1$, $Q_F = \text{diag}(100,100,100,200)$, $\mathcal{V} = \{v\in \mathbb{R}: ||v||_\infty \leq 20\}$ and $K = [0,	-7.3989,	-10.435,	-3.7039]$.
Finally, we implemented the CLF-CBF~\eqref{eq:CLF-CBF-QP} for $\mathcal{S}_e = \{e \in \mathbb{R}^n: e^\top Q_ee \leq 1 \}$ with $Q_e = \text{diag}(1/0.2^2, 1/0.1^2,1/0.05^2,1/0.01^2)$, $\mathcal{X}_c = \mathbb{R}^n$ and $\mathcal{U} = \mathbb{R}$. 
\begin{figure}[h!]
    \centering
	\includegraphics[width= 0.9\columnwidth]{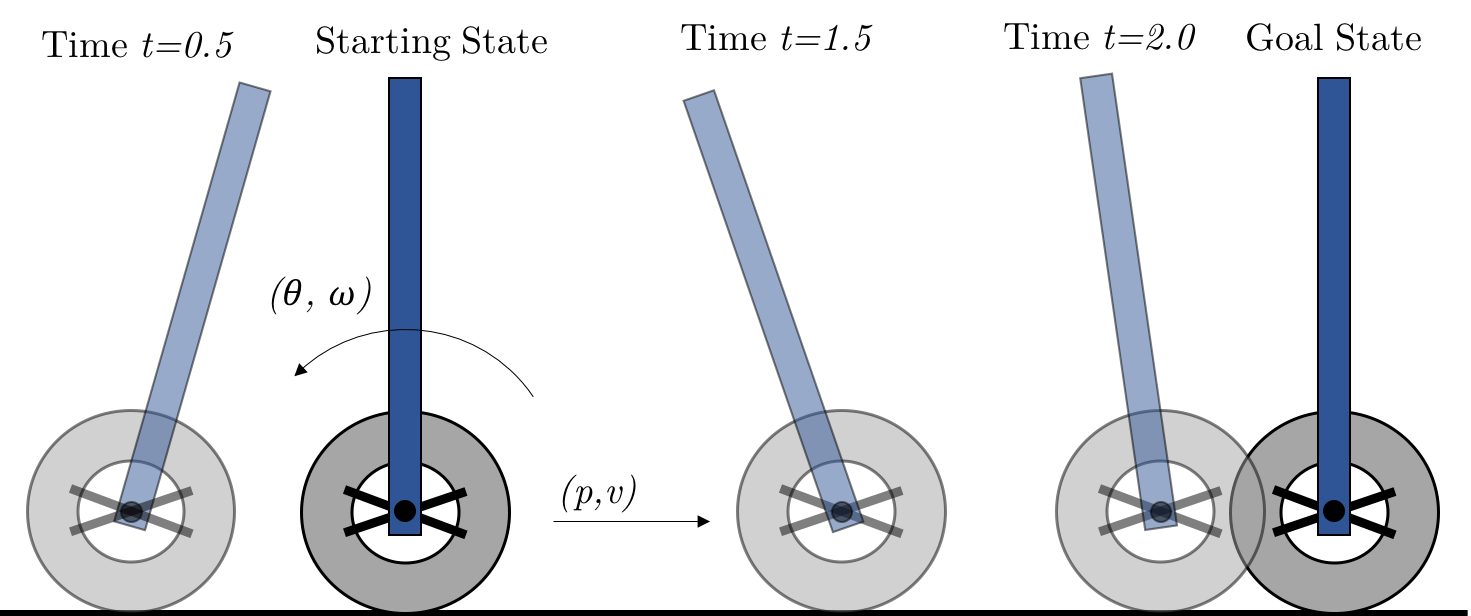}
    \caption{The goal of the controller is to steer the Segway to a goal state while keeping the rod upright. Furthermore, we reported snippets of the state trajectory from Section~\ref{sec:ex1} at $0.5$s,$1.5$s and $2$s.}
    \label{fig:segway}
\end{figure}


\subsection{Unconstrained Example with Low Frequency Update}\label{sec:ex1}

In this example, we run the high level MPC planner at $2$Hz, we set $\mathcal{X}_d = \mathbb{R}^n$ and the MPC horizon $N=10$. Figure~\ref{fig:stateComparison_ex1} shows the closed-loop trajectories for the proposed strategy, a linear MPC and nonlinear MPC policies, which are implemented at $100$Hz and $20$Hz for prediction horizons $N^{\mathrm{100Hz}} = 500$ and $N^{\mathrm{20Hz}} = 100$, respectively.
All strategies plan the desired trajectory over a receding time window of $5$ seconds. 
We notice that the linear MPC overshoots the goal state and the nonlinear MPC discretized at $20$Hz oscillates before reaching the target state. 
On the other hand, the proposed strategy performs similarly to the high frequency nonlinear MPC (discretized at $100$Hz with prediction horizon $N^{\mathrm{100Hz}} = 500$), while being implemented with a $2$Hz model update rate, a prediction horizon $N=10$ and solving convex optimization problems.
This example shows the advantage of the proposed multi-rate architecture, where the high level control action is updated at a lower frequency than the low level input command, as shown in Figure~\ref{fig:input_lowFreq}.


\begin{figure}[h!]
    \centering
	\includegraphics[trim= 10mm 10mm 10mm 10mm, width= 0.85\columnwidth]{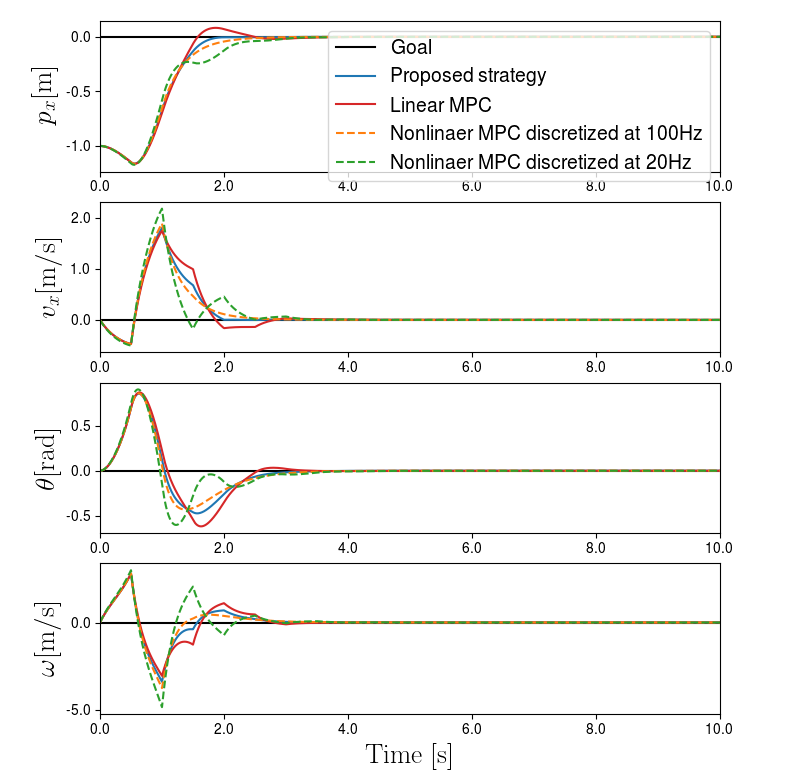}
    \caption{Closed-loop trajectories for the proposed strategy, linear and nonlinear MPC policies. The evolution of the position $p_x$ and velocity $v_x$ underline that the proposed methodology steers the system to the goal state without overshooting.}
    \label{fig:stateComparison_ex1}
\end{figure}



\begin{figure}[h!]
    \centering
	\includegraphics[width= 0.75\columnwidth]{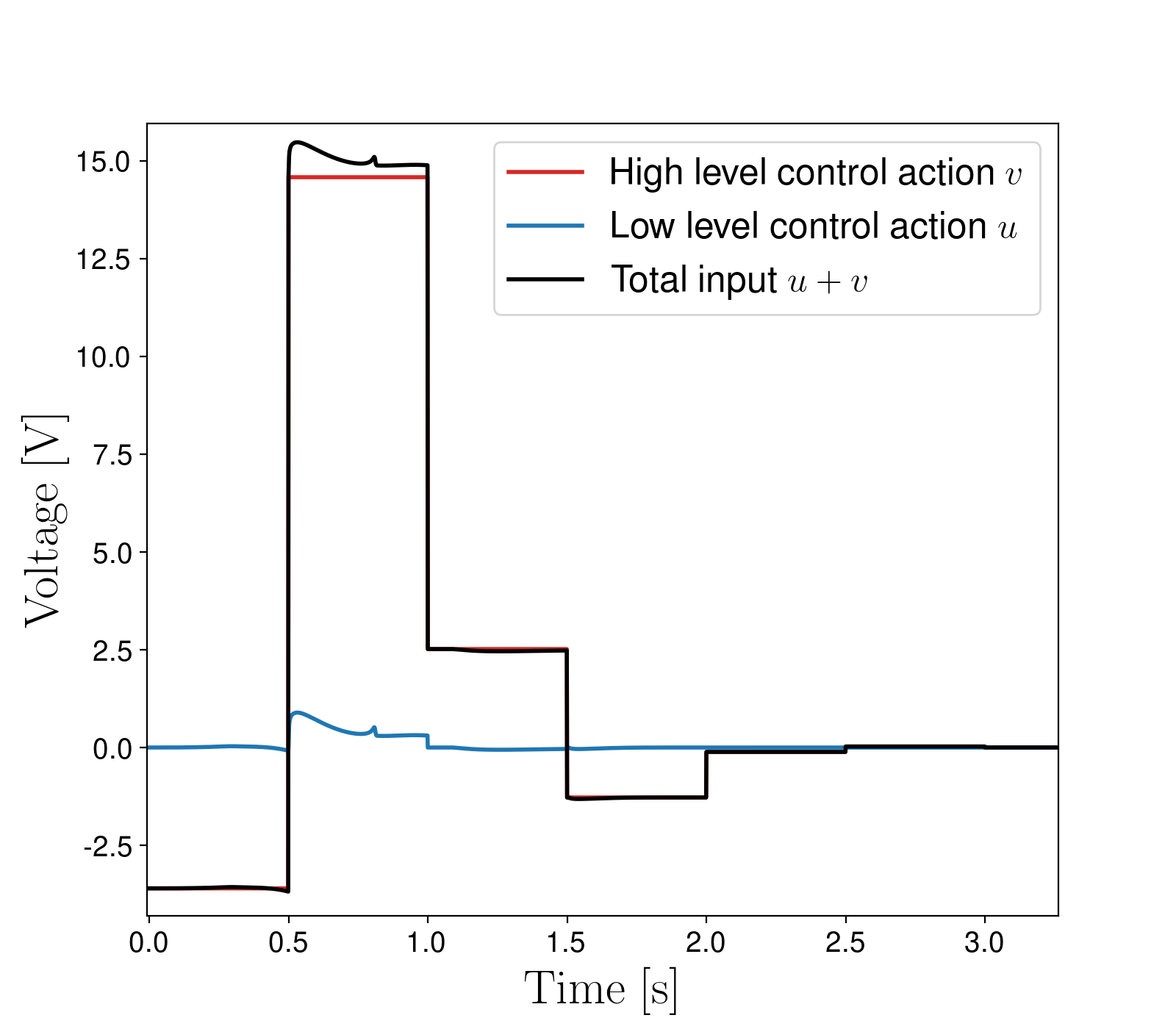}
    \caption{The total input applied to the system is given by the summation of the high level control action from the MPC (in red) and the low level CLF-CBF QP (in blue). We notice that the high level input is updated at $2$Hz, whereas the low level input is updated at $1000$Hz. }
    \label{fig:input_lowFreq}
\end{figure}

\subsection{Constrained Example with High Frequency Update}

In this example, we run the high level MPC planner at $10$Hz, we set $\mathcal{X}_d = \{x = [p_x, v_x, \theta, \omega]^\top \in \mathbb{R}^n : |\theta| \leq 0.78\}$ and the MPC horizon $N=10$.
We compare the proposed strategy with linear MPCs and nonlinear MPCs discretized at $10$Hz, $20$Hz and $100$Hz. Also for linear and nonlinear MPCs, we use the constraint tightening from~\eqref{eq:ftocp}. Figure~\ref{fig:thetaComparisonWithCnstr} shows that when the high frequency input from the  low level controller is not used, the closed-loop system violates the state constraints. We underline that constraint satisfaction for nonlinear MPC policies can be guaranteed using the approaches from~\cite{gao2014tube, kogel2015discrete, yu2013tube, singh2017robust,kohler2020computationally}. However, this example shows the advantage of using the high frequency low level controller to reduce the tracking error. Indeed, when the low level controller is not used, the constraint tightening from~\eqref{eq:ftocp} is not sufficient to guarantee constraint satisfaction, both when linear and nonlinear models are leveraged for planning.
Finally, we underline that the computational cost associated with the proposed strategy is $\sim 0.1$s. Whereas, the computational cost associated with linear MPCs discretized at $10$Hz, $20$Hz and $100$Hz is $\sim 0.1$s,$\sim 0.25$s and $\sim 1$s, respectively.

\begin{figure}[h!]
    \centering
	\includegraphics[width= 0.75\columnwidth]{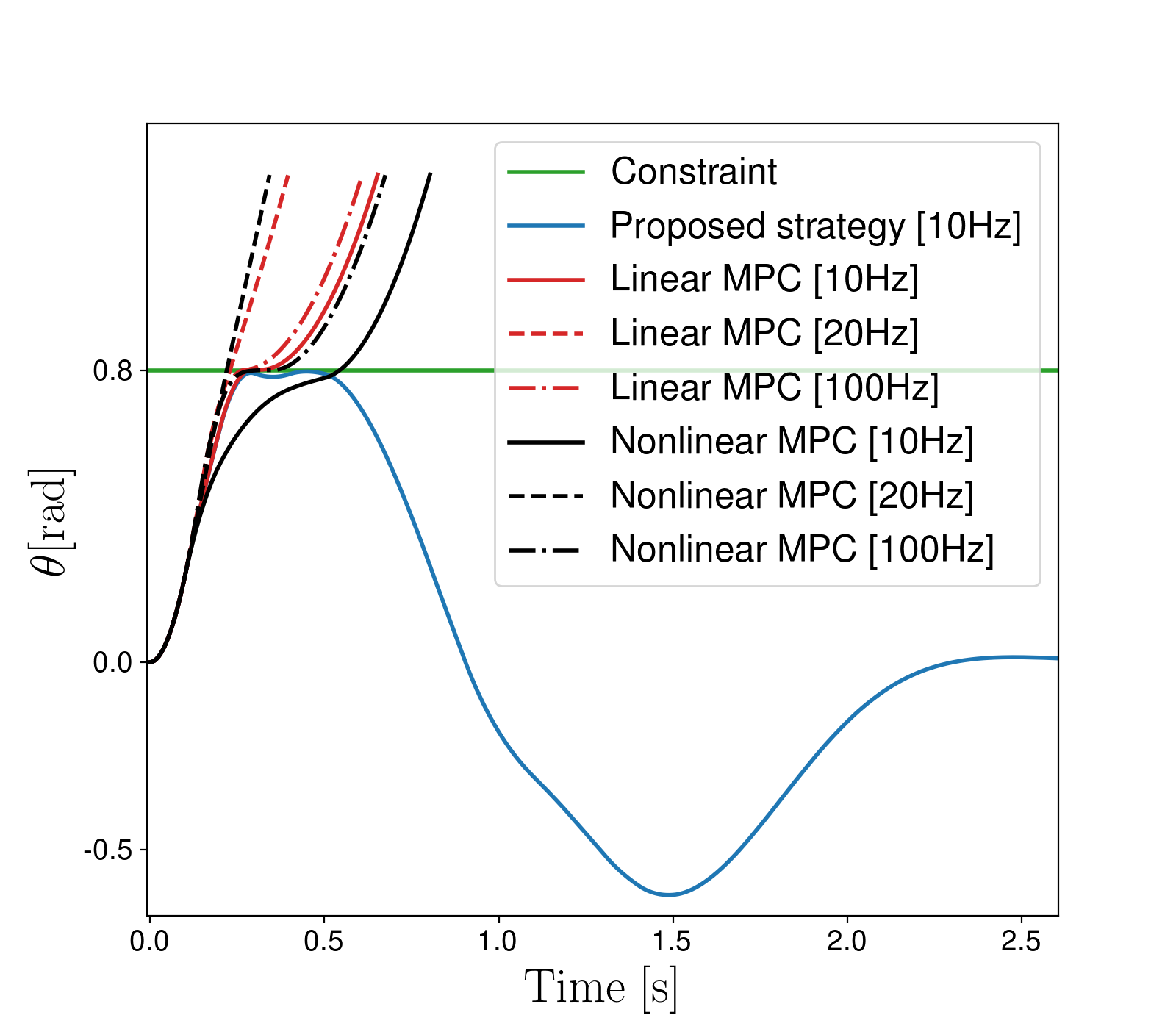}
    \caption{Evolution of the rod angle $\theta$ for the proposed strategy, linear and nonlinear model predictive controllers. When the low level controller is not used the closed-loop system violates the state constraints. }
    \label{fig:thetaComparisonWithCnstr}
\end{figure}

\section{Conclusions}
In this paper, we presented a multi-rate control architecture, where the high level planner and the low level controller run at different frequencies. 
First, we introduced sufficient conditions which guarantee recursive constraint satisfaction for the closed-loop system. Afterwards, we presented a controller design which leverages control barrier functions and MPC policies. 
\renewcommand{\baselinestretch}{0.96}

\bibliographystyle{IEEEtran} 
\bibliography{IEEEabrv,mybibfile}

\end{document}